\def\equationautorefname~#1\null{Equation (#1)\null}
\def\figureautorefname~#1\null{Figure (#1)\null}
\begin{document}
\title{Quantum Error Correction using Hypergraph States}
\author{Shashanka Balakuntala$^*$ and Goutam Paul$^\#$}
\affiliation{$^*$Department of Applied Mathematics and Computational Sciences,\\
PSG College of Technology, Coimbatore 641 004, India,\\
Email: shbalakuntala@gmail.com\\
$^\#$Cryptology and Security Research Unit,\\
R. C. Bose Centre for Cryptology and Security,\\
Indian Statistical Institute, Kolkata 700 108, India,\\
Email: goutam.paul@isical.ac.in
}
\begin{abstract}
Graph states have been used for quantum error correction by Schlingemann et al. [Physical Review A 65.1 (2001): 012308]. Hypergraph states [Physical Review A 87.2 (2013): 022311] are generalizations of graph states and they have been used in quantum algorithms. We for the first time demonstrate how hypergraph states can be used for quantum error correction. We also point out that they are more efficient than graph states in the sense that to correct equal number of errors on the same graph topology, suitably defined hypergraph states require less number of gate operations than the corresponding graph states.
\end{abstract}
\maketitle
\newcommand{\pf}{{\bf Proof : }}
\newtheorem{definition}{Definition}
\newtheorem{theorem}{Theorem}
\newtheorem{lemma}{Lemma}
\newtheorem{remark}{Remark}
\newtheorem{corollary}{Corollary}
\noindent{\bf Keywords:} Quantum Error Correction, Hypergraph States, Condition for Error Correction, Advantages, Encoding.

\section{\label{sec:intro}Introduction}
Error correction plays an important role in quantum information theory. Controlling operational errors and decoherence is a major challenge faced in the field of quantum computation. Quantum error correction was first proposed by P. W. Shor~\cite{shor1}. Since then there have been a lot of improvements in the field~\cite{robcal,knill,raylaf,dg1}. It is also to be noted that if one uses $n$ qubits to encode 1 qubit of information, one can detect no more than $n/2 - 1$ errors due to no cloning theorem~\cite{wootters}. Quantum error correction is essential, if one is to achieve fault-tolerant quantum computation that can deal not only with noise on stored quantum information, but also with faulty quantum gates, faulty quantum preparation, and faulty measurements~\cite{shor2,steane1,dg2}.

Schlingemann et al. gave a scheme for construction of error correcting codes based on a graph~\cite{werner}. The entanglement properties and applications of graph states can be found in~\cite{hein}. Verification of any error correction code is easier through graph states than the earlier existing methods. Other applications of graph states include secret sharing~\cite{markham}. Hypergraph states were first discussed in~\cite{rossi1,wang1}. In~\cite{rossi2}, they compute the amount of entanglement in the initial state of Grover's algorithm using hypergraph states. In the recent years hypergraph states are attracting a lot of attention, the local unitary symmetries of hypergraphs are discussed in~\cite{lyon}, their local Pauli stabilizers are discussed in~\cite{peters}. In~\cite{wang2}, they have discussed about the relationship between LME states and hypergraphs under local unitary transformations. Multipartite entanglement in hypergraphs are discussed in~\cite{wang3}. We, for the first time, use hypergraph states for error correction and show its efficiency over graph states. And the code that we discuss here are instances of stabilizer codes.

Consider a scenario where we have a network of CCTV cameras or satellites. This can be easily modelled as a graph, more precisely as a hypergraph. The relationship defined can be as follows: if two (or more) cameras cover the same area (from different angles) then we draw a hyper-edge between them. So now say the cameras qubits are entangled. If we are using these qubits to transmit a message and because of the fault in qubits there is an error in the transmitted message, this would mean that some of the cameras may be faulty and we need to repair them as soon as possible for smooth transmission of the feed. As the topology of the network follows a hypergraph, we try to give a condition through which we can easily identify the faulty bits and correct them. By finding the faulty bits, we get to know which cameras' qubits are faulty, and we can check the corresponding cameras. 

The paper is organized as follows. First we discuss the basics of hypergraph states, then we give the general condition for error correction. Next, we talk about the basic construction of the graphs which is required for the error correction. After that, we give the condition for error correction for a hypergraph state and a function to correct the error, followed by a short proof. Then we give the methods of encoding. Finally, we show what advantages hypergraphs have over graph states when using them for error correction.

\section{\label{sec:hg_intro}Hypergraph States}
Here we briefly review the concepts related to a hypergraph. As mentioned in~\cite{rossi1}, we first talk about the $k$-uniform hypergraphs. A $k$-uniform hypergraph $\Gamma^k = \{V,E\}$, containing $V$ a non-empty finite set of vertices $V$ and set of edges $E$, where each $e \in E$ connects exactly $k$ vertices.

Given a $k$-uniform hypergraph, one can find the $k$-uniform hypergraph state corresponding to the given graph as follows. Assign $\ket{+}$ to each and every vertex and if vertices $z_1, z_2, \ldots, z_k$ are connected by an edge and perform multi-qubit controlled-$Z$ operation between the connected qubits. So the final quantum state corresponding to the given $k$-uniform hypergraph will be as follows:
\begin{equation*}
\ket{\Gamma^{k}} = \prod_{\{z_1, z_2, \ldots, z_k\} \in E} C^{k}Z_{z_1, z_2, \ldots, z_k} \ket{+}^{\otimes |V|}.
\end{equation*}

where,
\begin{equation*}
    C^kZ = \textrm{diag}(\underbrace{1,1,\ldots,1}_{2^k-2},1,-1)
\end{equation*}

Next we discuss general hypergraph states. A hypergraph $\Gamma^{\le n} = \{V,E\}$ is a non-empty set $V$ of $n$ vertices and a set $E$ of edges, where the edges can connect any number of vertices from 1 to $n$. Given a hypergraph, the corresponding state can be written as follows:
\begin{equation*}
\ket{\Gamma^{\le n}} = \prod_{k = 1}^{n}\prod_{\{z_1, z_2, \ldots, z_k\} \in E} C^{k}Z_{z_1, z_2, \ldots, z_k} \ket{+}^{\otimes |V|}.
\end{equation*}

\section{\label{sec:err_corr}Error Correction}
General characteristics of quantum error correction was discussed by E. Knill and R. Laflamme in \cite{knill}. In quantum information, error correcting code is defined to be an isometry\footnote{An isometry is a transformation which is invariant with respect to the distance, i.e., the distance is preserved before and after transformation.} $v : \mathcal{H} \rightarrow \mathcal{K}$, where $\mathcal{H}$ is the input Hilbert space and $\mathcal{K}$ is the output Hilbert space with $dim(\mathcal{K}) > dim(\mathcal{H})$, so that input density operator $\rho$ is transformed by this coding into $v\rho v^\ast$, a density operator in $\mathcal{K}$. The code's output is then sent through the channel. The channel noise is described by a certain class of errors belonging to a linear subspace $E$ of operators on $\mathcal{K}$. Thus the channel can be represented as a completely positive linear map as follows:
\begin{equation*}
T(\sigma) = \sum_{\alpha}F_\alpha \sigma F_\alpha^\ast ,
\end{equation*}
where $\sigma \in \mathcal{K}$, $F_\alpha \in E$. The isometry $v$ is a error correcting code for $E$, if there is a recovery operator $R$ such that:
\begin{equation}
\label{cond_g}
R(T(v\rho v^\ast)) = \rho , \quad\qquad \forall \rho \in \mathcal{H}.
\end{equation} 

Here, the recovery operation is such that it has to act as an inverse to two things, first it has to reverse the operation done by the error transformation $T$, and after doing this we should be able to find the density operator $\rho$ which was initially sent.

Next we will discuss the basic construction of hypergraph states in order to prepare them for error correction and we will arrive at a condition for error correction using the hypergraph states. This construction is similar to that of~\cite{werner}.

\section{\label{sec:basic_construction}Error Correction Using Hypergraph States}
The following determines the codes that we construct:
\begin{description}
	\addtolength{\itemindent}{1.5cm}
	\item[$\bullet$] An undirected graph $\Gamma$ with two kinds of vertices: the set $X$ of input vertices and the set $Y$ of output vertices. If the vertices $z_1,z_2,\ldots,z_k \in \{X \cup Y\}$ are related then $\Gamma(z_1,z_2,\ldots,z_k) = 1$.
The notation $\Gamma^k$ is for a $k$-uniform hypergraph and $\Gamma^{\le n}$ is for a general hypergraph.
	\item[$\bullet$] A finite \textit{abelian group} $G$ with a non-degenerate symmetric \textit{bicharacter}.
\end{description}
In~\cite{werner}, the bicharacter function was $G \times G \rightarrow \mathbb{C}$, here we extend it to $k$-uniform bicharacter function. 

\theoremstyle{definition}
\begin{definition}{[$k$-uniform bicharacter function]}
A $k$-uniform bicharacter is a function $\chi^k: G \times G \times \cdots \times G \rightarrow \mathbb{C}$, which takes $k$ input arguments and maps it to a complex number such that, $\chi^k(g_1+h,g_2,\ldots,g_k) = \chi^k(g_1,g_2,\ldots,g_k)\chi^k(h,g_2,\ldots,g_k), \forall h$. The non-degeneracy is in a sense that:
\begin{equation*}
\sum_{g = g_1,g_2,\ldots,g_{k-1}}\chi^k(g,g^\prime) = |G|\delta(g^\prime),
\end{equation*}
\end{definition}
where,
\begin{equation*}
 \delta(g^\prime) = 
\begin{cases} 
1, & \mbox{if }g^\prime = 0,  \\
0, & \mbox{if }g^\prime \ne 0.
\end{cases}
\end{equation*}
The combined input system is described in $|X|$-fold tensor product space\footnote{$L^2(G)$ generates the Hilbert Space, which consists of all functions $\psi : G \rightarrow \mathbb{C}$, with inner product $\langle\phi,\psi\rangle = \int \overline{\phi}(g) \phi(g) dg$, where $\overline{z}$ is the complex conjugate of $z$.} $\mathcal{H}^{\otimes X} = L^2(G^X)$. A vector in this space is denoted by the notation $g_z, \forall z \in X$. The entire collection of these $g_z$ vectors is denoted as $g^X$. The same holds for the output system too. Now we define the isometry for $k$-uniform hypergraphs. 
\begin{equation*}
v_\Gamma^k : L^2(G^X) \rightarrow L^2(G^Y).
\end{equation*}
The scalar product in this space is given by,
\begin{equation*}
(v_\Gamma^k \psi)(g^Y) = \int v_\Gamma^k [g^{X\cup Y}]\psi(g^X) dg^X.
\end{equation*}
The $v_\Gamma$ under the integral denotes the kernel of the operator $v_\Gamma$ which depends on both input and output vertices. Explicitly the expressions for this kernel was given in~\cite{werner}, and for the $k$-uniform hypergraph it is given by:
\begin{equation*}
v_\Gamma^k [g^{X\cup Y}] = |G|^{-1}\prod_{z_1,z_2,\dots,z_k} \chi^k(g_{z_1},g_{z_2},\ldots,g_{z_k})^{\Gamma^k(z_1,z_2,\dots,z_k)}.
\end{equation*}
Now this concludes the construction of the isometry. We will give the condition of the error correction after we give more generalized form of this. Lets see how we can generalize these functions to a general hypergraph.
The isometry is defined by:
\begin{equation*}
v_\Gamma^{\le n} : L^2(G^X) \rightarrow L^2(G^Y).
\end{equation*}
Similarly the scalar product here is given by,
\begin{equation*}
(v_\Gamma^{\le n} \psi)(g^Y) = \int v_\Gamma^{\le n} [g^{X\cup Y}]\psi(g^X) dg^X.
\end{equation*}
And the kernel's expression is defined as follows:
\begin{equation*}
v_\Gamma^{\le n} [g^{X\cup Y}] = \prod_{k=1}^{n}\prod_{z_1,z_2,\dots,z_k} \chi^k(g_{z_1},g_{z_2},\ldots,g_{z_k})^{\Gamma^{\le n}(z_1,z_2,\dots,z_k)},
\end{equation*}
where it is normalized by $|G|^{-1}$. 

We say a vertex $z \in \mathcal{N}(x)$ if there is an edge which connects $z$ and $x$, which is same as saying $\Gamma^{\le n}(z,x,\ldots,z^\prime) = 1$.\\
We define another indicator function $f$ as follows:
\begin{equation*}
f(x,y) = 
\begin{cases} 
1, & \mbox{if }x \in \mathcal{N}(y),  \\
0, & \mbox{if }x \notin \mathcal{N}(y).
\end{cases}
\end{equation*}
Now we define a homomorphism which will help us to define the condition for error correction easily. This existed for graph states and now we try to define it for the hypergraphs.
A homomorphism between two subsets $K$ and $L$ of $X \cup Y$ denoted by $H: G^L \rightarrow G^K$ is defined as follows: 
\begin{equation*}
H_L^K(g^L) = \left(\sum_{l \in L} g_lf(l,k)\right)_{k \in K} .
\end{equation*}
Using this homomorphism, the necessary and sufficient condition for the error correction can be derived.

\begin{theorem}
Given a finite abelian group $G$ and an undirected hypergraph $\Gamma^{\le n}$, an error configuration $E \subset Y$ is detected by the quantum code $v_\Gamma$ if and only if the system of equations:
\begin{equation}
\label{suf1}
H_{X \cup E}^I g^{X\cup E} = 0	\qquad with\ I = Y\setminus E
\end{equation}
implies:
\begin{equation}
\label{suf2}
g^X = 0 \quad \textrm{and} \quad H_E^X g^E = 0.
\end{equation}
\end{theorem}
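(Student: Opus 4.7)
The plan is to reduce the claim of error detection to a Knill-Laflamme-type orthogonality criterion on matrix elements of the isometry and then to evaluate those matrix elements explicitly using the kernel $v_\Gamma^{\le n}[g^{X\cup Y}]$. Detection of the error configuration $E$ is equivalent to the statement that for every operator $F$ supported on $L^2(G^E)$, the matrix elements $\langle\, v_\Gamma \, g'^X \,|\, F \otimes \mathbb{1}_I \,|\, v_\Gamma \, g^X \,\rangle$ factor as a scalar depending only on $F$ times $\delta(g'^X - g^X)$. It is enough to test this on the Weyl basis $W_E(a,b) = U_E(a) V_E(b)$ made of group shifts $U_E(a)$ and bicharacter modulations $V_E(b)$, since these span the full operator algebra on $L^2(G^E)$.

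The second step is the direct computation. Substituting the kernel, applying $U_E(a)$ as a translation $g^E \mapsto g^E - a$ and $V_E(b)$ as a bicharacter phase, one splits the hyperedge product into edges touching $E$ and edges not touching $E$. The defining multiplicativity $\chi^k(g_1+h, g_2, \ldots, g_k) = \chi^k(g_1, g_2, \ldots, g_k)\, \chi^k(h, g_2, \ldots, g_k)$ lets the shift $a$ factor out as a phase whose arguments collect contributions from every vertex adjacent to a shifted vertex; this is where the indicator $f(l,k)$ and the homomorphism $H$ enter naturally. Integrating out the coordinates on $I = Y\setminus E$ and invoking the non-degeneracy $\sum_g \chi^k(g, g') = |G|\delta(g')$ collapses the oscillating phase into a product of delta functions whose combined argument is precisely $H_{X\cup E}^I g^{X\cup E}$, shifted by terms linear in $(a,b)$.

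These delta functions force $H_{X\cup E}^I g^{X\cup E} = 0$, after which the surviving amplitude is a residual phase on $G^{X\cup E}$. The detection requirement --- proportionality to $\delta(g'^X - g^X)$ with a coefficient depending only on $(a,b)$ --- forces this residual phase to be trivial in $g^X$ and to factor through the pairwise homomorphism on $E$, which is precisely the implication $g^X = 0$ and $H_E^X g^E = 0$. The converse is then obtained by substituting these vanishing conditions back into the computed matrix element and checking that it assumes the required product form, which in turn verifies the Knill-Laflamme condition inherited from \eqref{cond_g}.

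The main obstacle is bookkeeping the higher-order bicharacter phases generated by shifts on $k$-hyperedges: unlike the graph-state setting of Schlingemann et al., a shift at a single vertex produces a $(k{-}1)$-body phase factor rather than a single-body modulation. The argument must lean on the fact that the homomorphism $H$ is defined through the \emph{pairwise} indicator $f$ arising from hyperedge adjacency, together with the non-degeneracy of $\chi^k$ in which the sum runs over all but one slot. It is precisely this form of non-degeneracy that collapses the multi-body phases into pairwise constraints captured by $H$, and making this collapse rigorous --- particularly in tracking which hyperedges feed into which delta function --- is where most of the technical effort will concentrate.
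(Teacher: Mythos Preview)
Your plan is structurally the same as the paper's: reduce to the Knill--Laflamme factorization condition, expand the matrix elements of $v_\Gamma^\ast F v_\Gamma$ via the explicit kernel, integrate out the $I=Y\setminus E$ coordinates using the non-degeneracy of $\chi^k$ to produce $\delta$-constraints whose argument is $H_{X\cup E}^I$ applied to a difference variable, and then read off the two implications in \eqref{suf2} from the residual phase. The one real difference is the operator basis: the paper tests on rank-one matrix units $F=\ket{g^E}\bra{h^E}$, so the ``shift'' variable is simply $h^{X\cup E}-g^{X\cup E}$ and the whole computation is a quotient of two bicharacter products; you instead test on the Weyl system $W_E(a,b)=U_E(a)V_E(b)$, so the shift $a$ and the modulation $b$ enter separately. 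Both bases span the error algebra and lead to the same $\delta$-functions after the $I$-integration; the matrix-unit choice keeps the bookkeeping slightly lighter (no extra $b$-phase to track), while your Weyl choice aligns more naturally with the stabilizer description of hypergraph states used later in the encoding section. You have also correctly isolated the only genuinely new hypergraph difficulty---that a shift on one leg of a $k$-edge emits a $(k{-}1)$-body phase---and the mechanism you name (the stated form of non-degeneracy summing over all but one slot) is exactly what the paper invokes to collapse these to the pairwise homomorphism $H$.
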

\begin{proof}
To prove this claim, we first see the equivalence of \autoref{cond_g} and the general condition mentioned by \cite{knill}. This was given by Schlingemann et al. in~\cite{werner} which reduces to the following:
\begin{equation}
\label{red}
\begin{aligned}[b]
    v_\Gamma^\ast F v_\Gamma[g^X, h^X]
   = \int dg^E dg^I dh^E \quad \overline{v_\Gamma[g^X, g^E, g^I]} \\ \times F[g^E,h^E] v_\Gamma[g^X, h^E, g^I]
\end{aligned}
\end{equation}
by choosing $ F = \ket{g^E}\bra{h^E}$, which can be further factorized as:
\begin{equation}
    \label{nsc}
    w_{[\Gamma, E]}[g^{X \cup E}, h^{X \cup E}] = C(g^E, h^E) \delta(g^X - h^X).
\end{equation}

We follow the proof structure as in~\cite{werner}. We define another function for any two $K$ and $K^\prime$ of $X \cup Y$,
\begin{equation*}
    \chi^{\Gamma^{\le n}}(g^K, g^{k^\prime})= \prod_{k = 1}^n\prod_{z = z_1,z_2,\ldots,z_{k-1},z_k} \chi^k(z)^{\Gamma^k(z)},
\end{equation*}

where $\{z_1,z_2,\ldots,z_{k-1}\} \in K$ and $z_k \in K^{\prime}$ and the product is over all possible $k-1$ elementary subsets in $K$ with its combination of 1 element in $K^\prime$. So the expression inside the integral of \autoref{red} can be expressed as:
\begin{equation}
    \label{red1}
    |G|^X \quad \frac{\chi^{\Gamma^{\le n}}(h^{X \cup E}, h^{X \cup E})}{\chi^{\Gamma^{\le n}}(g^{X \cup E}, g^{X \cup E})} \frac{\chi^{\Gamma^{\le n}}(h^{X \cup E}, g^{I})}{\chi^{\Gamma^{\le n}}(g^{X \cup E}, g^{I})},
\end{equation}

where $I = Y \setminus E$. To carry out the integral with respect to one variable $g_i, i \in I$, the $g_i$ dependent part of $\chi^{\Gamma^{\le n}}(h^{X \cup E}, g^{I})$ is as follows:
\begin{equation*}
    \prod_{k = 1}^{n}\prod_{\{z, i\}, z \in X\cup E} \chi^k(h_z,g_i)^{\Gamma^{\le n}(z,i)},
\end{equation*}

where $z=\{z_1,z_2,\ldots,z_{k-1}\}$ and $h_z=\{h_{z_1},\ldots,h_{z_{k-1}}\}$. Similarly the same for the factor $\chi^{\Gamma^k}(g^{X \cup E}, g^{I})$. Thus the $g_i$ dependent part of \autoref{red1} is:
\begin{equation*}
    \prod_{k = 1}^{n}\prod_{\{z, i\}, z \in X\cup E} \chi^k(g_z,g_i)^{-\Gamma^{\le n}(z,i)}\chi^k(h_z,g_i)^{\Gamma^{\le n}(z,i)}.
\end{equation*}

Using the characteristics of the function $\chi^k$ we can reduce the equation above to a single factor of the form $\chi^k(g^\prime,g_i)$, where 
\begin{equation*}
    \begin{aligned}
        g^\prime = \sum_{\{j_1,\ldots,j_{k-1}\} \in X \cup E} \Gamma^k_{i,j_1,\ldots,j_{k-1}}(h_{j_1} - g_{j_1}),\ldots,\\\sum_{\{j_1,\ldots,j_{k-1}\} \in X \cup E} \Gamma^k_{i,j_1,\ldots,j_{k-1}}(h_{j_{k-1}} - g_{j_{k-1}}).
    \end{aligned}
\end{equation*}
where the sum is taken over all possible $k-1$ elementary sets in $X \cup E$, and
\begin{equation*}
    \Gamma^k_{i_1,i_2,\ldots,i_k} =
    \begin{cases}
        1 & \textrm{if } i_1,i_2,\ldots,i_k \textrm{ are related},\\
        0 & \textrm{otherwise}.
    \end{cases}
\end{equation*}
Now the integral over $g_i$ gives us $\delta(g^\prime)$, which is same as:
\begin{equation}
\label{red2}
   |G|^X \quad \frac{\chi^{\Gamma^{\le n}}(h^{X \cup E}, h^{X \cup E})}{\chi^{\Gamma^{\le n}}(g^{X \cup E}, g^{X \cup E})}  \delta(H_{X \cup E}^I(h^{X \cup E} - g^{X \cup E})).
\end{equation}

The above equation can be reduced to the error correcting condition for an isometric function given in~\autoref{nsc}. As one factor of the equation is independent of the input vertices $X$ and the other is dependent factor of input vertices. And it is also to be noted that the above expression vanishes when $g^X = h^X$ and this implies that the $\delta$-function vanishes. So we can conclude that: if and only if $d^X \ne 0$, it implies $H_{X \cup E}^I(d^X) \ne 0$. 

Now assume that $H_{X \cup E}^I(d^X) = 0$ implies $d^X \ne 0$. The $\delta$-function can be reduced as:
\begin{equation*}
\begin{aligned}
    \delta(H_{X \cup E}^I(h^{X \cup E} - g^{X \cup E}))
    = \delta(H_{E}^I(h^{E} - g^{E}))\delta(h^{X} - g^{X}),
\end{aligned}
\end{equation*}

as two expressions are equal when $h^{X} = g^{X}$, and for $h^{X} \ne g^{X}$ both vanishes. Now for the bicharacter quotient in \autoref{red2}, we can easily simplify as follows:

\begin{equation*}
    \chi^{\Gamma^{\le n}}(h^{X}, h^{X})\chi^{\Gamma^{\le n}}(h^{E}, h^{X})\chi^{\Gamma^{\le n}}(h^{E}, h^{E}).
\end{equation*}
Similar decomposition holds for the denominator too. So the $(X,X)$ and $(E,E)$ factors cancel in numerator and denominator. So the remaining $(X,E)$ factors can be expressed as follows:
\begin{equation*}
    \frac{\chi^{\Gamma^{\le n}}(h^{E}, h^{X})}{\chi^{\Gamma^{\le n}}(g^{E}, h^{X})} = \prod_{k = 1}^{n}\prod_{j \in X} \chi^k(h_{j^\prime},h_j),
\end{equation*}
where,
\begin{equation*}
\begin{aligned}
    h_{j^\prime} = \{\sum_{\{j_1,\ldots,j_{k-1}\} \in X \cup E} \Gamma^k_{j_1,\ldots,j_{k-1},j}(h_{j_1} - g_{j_1}),\ldots,\\\sum_{\{j_1,\ldots,j_{k-1}\} \in X \cup E} \Gamma^k_{j_1,\ldots,j_{k-1}}(h_{j_{k-1},j} - g_{j_{k-1}})\}.
\end{aligned}
\end{equation*}

For the above equation to be in the desired form $C(g^E,h^E)$, this must be independent of all the $X$ variables. But this is satisfied, and the equation becomes independent of $h_j$ if and only if $j^\prime = 0$. Hence we must have $H_E^I(h^E - g^E) = 0$ implies $H_E^X(h^E - g^E) = 0$.

This concludes the proof.
\end{proof}

\subsection{Physical Encoding using Hypergraph States}

In this section we give two different ways to encode the error correcting code based on hypergraph states. The first one follows the secret sharing scheme using graph states as per~\cite{markham}. The encoding is given by:
\begin{equation*}
    \ket{0} \rightarrow \ket{0}_D\otimes \Gamma^{\le n}, \qquad \quad \ket{1} \rightarrow \ket{1}_D\otimes \overline{X}\Gamma^{\le n}, 
\end{equation*}
where $\overline{X} = X_1 \otimes X_2 \otimes \cdots \otimes X_{n}$. The subscript $D$ is to convey that the encoding is not using the physical bit in the process. Note that here one qubit is encoded into $n+1$ qubits.

The second way to encode the error correcting code uses the fact that a set of stabilizers can be defined on hypergraph states, as follows:
\begin{equation*}
    K_i = X_i \otimes \prod_{k=1}^{n} \prod_{\{v_1,\ldots,v_{k-1}\} \in N(i)} C^{k-1}Z_{v_1,\ldots,v_{k-1}}.
\end{equation*}
Based on the above definition, one can follow~\cite{cleve} for encoding and decoding using stabilizer formalism.

\section{Advantages Over Graph States}
We will now argue that the number of gate operations required to prepare a hypergraph state is less than that to prepare the corresponding graph state (for correcting equal number of errors) on the same topology. 

First we see how many standard Controlled-Z gates are required to implement one multi-qubit Controlled-Z gate. As per~\cite{marklov}, an $n$-qubit Controlled-Z gate requires $2n$ standard Controlled-Z gates.

To replicate any effects of the hypergraph states in the graph state, we need to form a complete graph with the vertices that form a hyper-edge in the corresponding hypergraph. So for a hyper-edge with $n$ nodes, we need a complete subgraph on $n$ nodes requiring $n(n-1)/2$ edges.

We can easily see that for $n \geq 6$, the number of Controlled-Z gates required for a complete graph would be greater than the number of Controlled-Z required to implement the multi-qubit Controlled-Z for the corresponding hypergraph state. For example, let us consider $n = 6$, the number of Controlled-Z required for a complete graph in this case 15. But we can implement a $C^6Z$ by using just 12 Controlled-Z.

\subsection{A Detailed Example Using 6-uniform Hypergraph}
\begin{figure}
    \centering
    \includegraphics[width=1\linewidth]{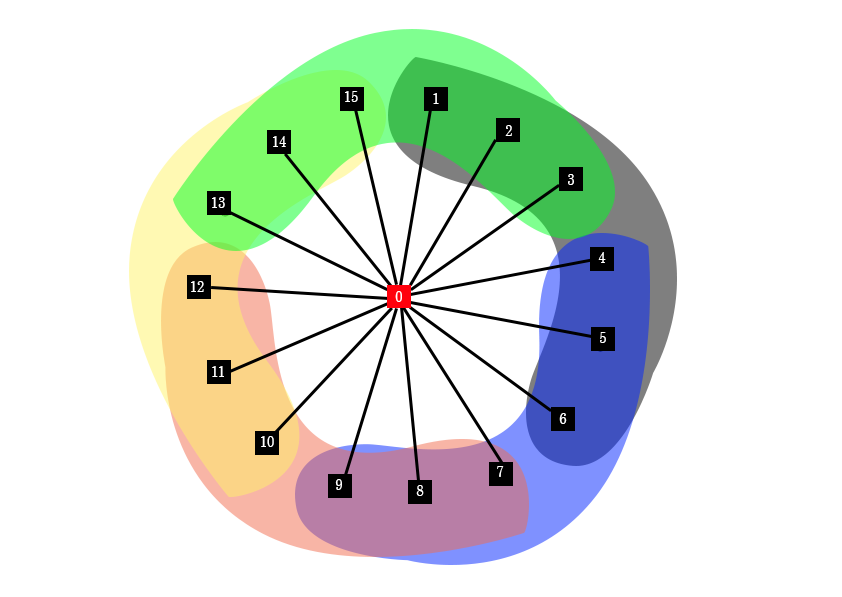}
    \caption{Hypergraph state for 15-fold way correcting 4 arbitrary errors}
    \label{fig:example}
\end{figure}

Now, we will give an example using hypergraphs. The hypergraph state for the code can be seen in~\autoref{fig:example}.

For this graph,
\begin{equation*}
    X = \{0\}, \qquad \qquad Y = \{1,2,\ldots,15\}.
\end{equation*}

$\Gamma$ for the graph is a 6-dimensional matrix, with zero entry everywhere except the following.
\begin{equation*}
    \begin{aligned}
        \Gamma(1,2,3,4,5,6) = 1,\\
        \Gamma(4,5,6,7,8,9) = 1,\\
        \Gamma(7,8,9,10,11,12) = 1,\\
        \Gamma(10,11,12,13,14,15) = 1,\\
        \Gamma(1,2,3,13,14,15) = 1.
    \end{aligned}
\end{equation*}

It is also to be noted that the adjacency matrix of the graph $\Gamma$ is a symmetric matrix.
The corresponding hypergraph state is given by:
\begin{equation*}
    \Gamma^4 = \prod_{\{z_1,z_2,z_3,z_4,z_5,z_6\}\in E} C^6Z \ket{+}^{\otimes15}.
\end{equation*}

Now we show here that, given a finite abelian group $G$, the hypergraph can detect up to four errors. There are six different error configurations and we will discuss it one by one. The vertices coloured red are the errors.

\begin{figure}[h]
    \centering
    \includegraphics[width=.8\linewidth]{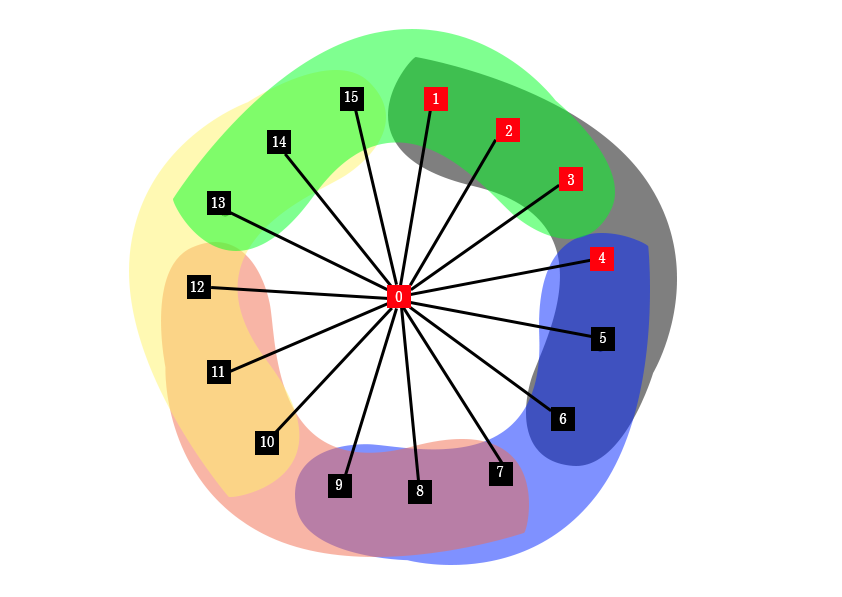}
    \caption{All errors occurring in same edge.\label{fig:e1}}
  \end{figure}
The first error configuration can be seen in \autoref{fig:e1} and the corresponding set of equations becomes as follows.
\begin{center}
\begin{tabular}{ |c|c| } 
 \hline
 Vertex $x$ & Equations \\ 
 \hline
 5 and 6 & $d_0 + d_1 + d_2 + d_3 + d_4 = 0$ \\ 
 7, 8 and 9 & $d_0 + d_4 = 0$ \\
 10, 11 and 12 & $d_0 = 0$\\
 13, 14 and 15 & $d_0 + d_1 + d_2 + d_3 = 0$\\
 \hline
\end{tabular}
\end{center}

Here $d_0 = 0$ which implies $d_1 + d_2 + d_3 + d_4 = 0$. Thus the condition in \autoref{suf2} is satisfied. So we can say this error configuration is detected.

\begin{figure}[h]
    \centering
    \includegraphics[width=.8\linewidth]{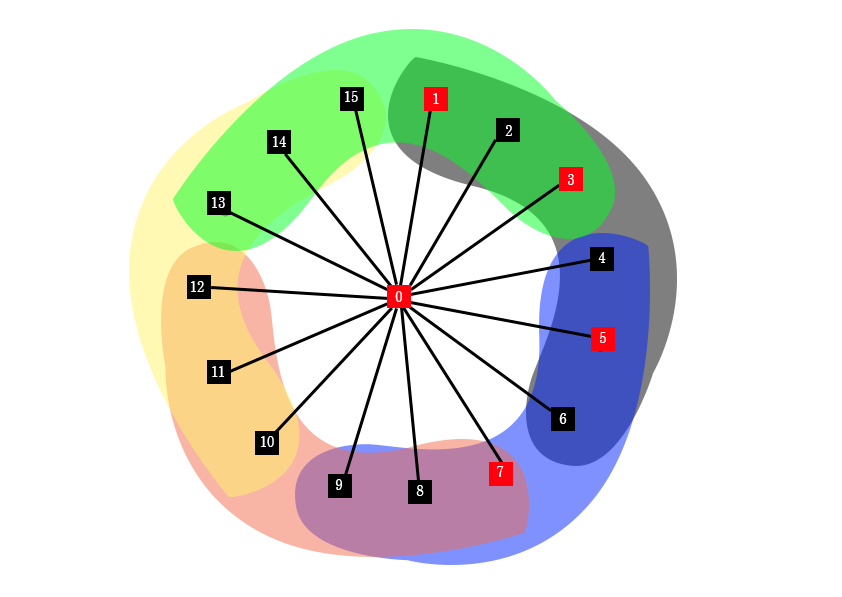}
    \caption{Errors occurring alternatively.\label{fig:e2}}
  \end{figure}
For the error configuration \autoref{fig:e2}, the set of equations becomes:
\begin{center}
\begin{tabular}{ |c|c| } 
 \hline
 Vertex $x$ & Equations \\ 
 \hline
 2 & $d_0 + d_1 + d_3 + d_5 = 0$ \\ 
 4 and 6 & $d_0 + d_1 + d_3 + d_5 + d_7 = 0$ \\
 8 and 9 & $d_0 + d_5 + d_7 = 0$\\
 10, 11 and 12 & $d_0 + d_7 = 0$\\
 13, 14 and 15 & $d_0 + d_1 + d_3 = 0$\\
 \hline
\end{tabular}
\end{center}

This set of equations clearly implies that $d_0 = d_5 = d_7 = 0$ and $d_1 + d_3 =0$. Thus the error configuration is detected.
\begin{figure}[h]
    \centering
    \includegraphics[width=.8\linewidth]{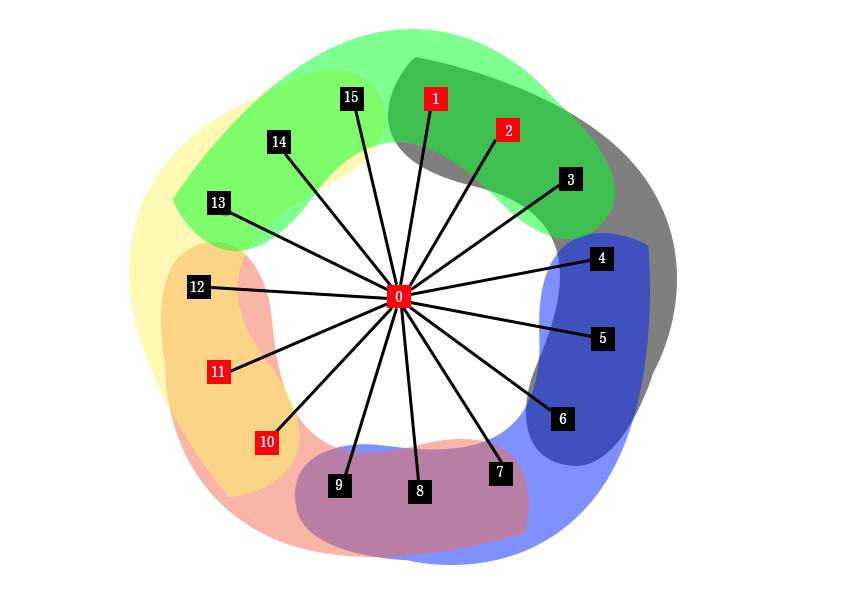}
    \caption{Two errors in same hyper-edge but not exactly opposite to each other.\label{fig:e3}}
  \end{figure}
For the error configuration \autoref{fig:e3}, the set of equations becomes:
\begin{center}
\begin{tabular}{ |c|c| } 
 \hline
 Vertex $x$ & Equations \\ 
 \hline
 3,4, 5 and 6 & $d_0 + d_1 + d_2 = 0$ \\ 
 7, 8, 9 and 12 & $d_0 + d_{10} + d_{11} = 0$ \\
 13, 14 and 15 & $d_0 + d_1 + d_2 + d_{10} + d_{11} = 0$\\
 \hline
\end{tabular}
\end{center}

Here $d_0 = 0$ and $d_1 + d_2 + d_{10} + d_{11} = 0$. Thus the condition in \autoref{suf2} is satisfied. So this error configuration is detected.

\begin{figure}[h]
    \centering
    \includegraphics[width=.8\linewidth]{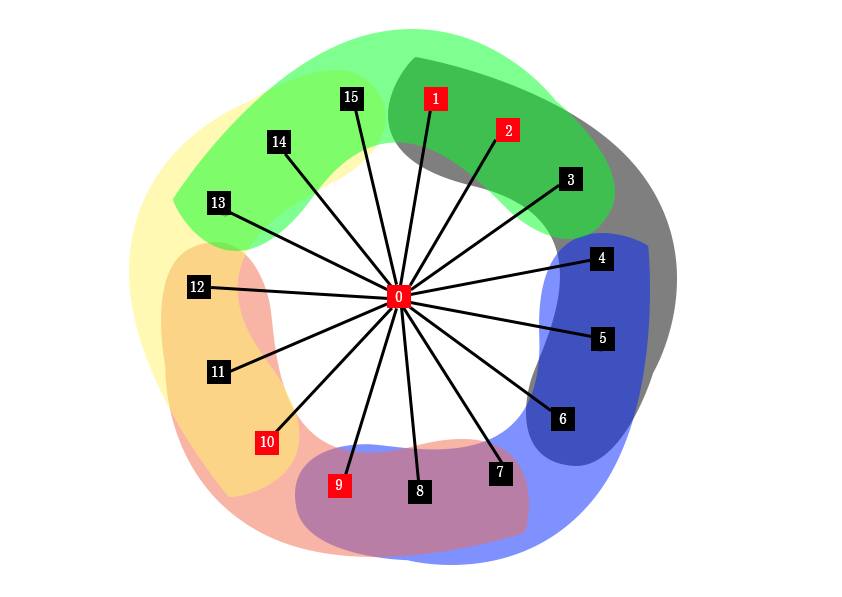}
    \caption{Two errors in same hyper-edge.\label{fig:e4}}
  \end{figure}
For the error configuration \autoref{fig:e4} the set of equations becomes:
\begin{center}
\begin{tabular}{ |c|c| } 
 \hline
 Vertex $x$ & Equations \\ 
 \hline
 3 & $d_0 + d_1 + d_2 = 0$ \\ 
 4, 5 and 6 & $d_0 + d_1 + d_2 + d_9 = 0$ \\
 7, 8, 11 and 12 & $d_0 + d_9 + d_{10} = 0$\\
 13, 14 and 15 & $d_0 + d_1 + d_2 + d_{10} = 0$\\
 \hline
\end{tabular}
\end{center}

here this set of equations implies that $d_0 = d_9 = d_{10} = 0$ and $d_1 + d_2 = 0$. Thus the error configuration is detected. 

\begin{figure}[h]
    \centering
    \includegraphics[width=.8\linewidth]{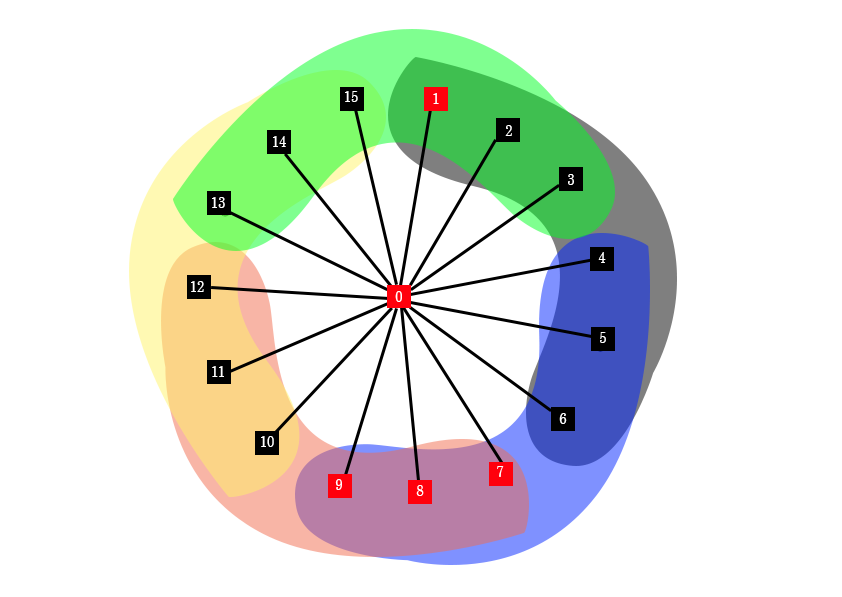}
    \caption{Three errors in different edges.\label{fig:e5}}
  \end{figure}
For the error configuration \autoref{fig:e5} the set of equations becomes:
\begin{center}
\begin{tabular}{ |c|c| } 
 \hline
 Vertex $x$ & Equations \\ 
 \hline
 2, 3, 13, 14 and 15 & $d_0 + d_1 = 0$ \\ 
 4, 5 and 6 & $d_0 + d_1 + d_7 + d_8 + d_9 = 0$ \\
 10, 11 and 12 & $d_0 + d_7 + d_8 + d_9 = 0$\\
 \hline
\end{tabular}
\end{center}

For the above set of equations, $d_0 = d_1 = 0$ and $d_7 + d_8 + d_9 = 0$. So this error configuration is also detected.

\begin{figure}[h]
    \centering
    \includegraphics[width=.8\linewidth]{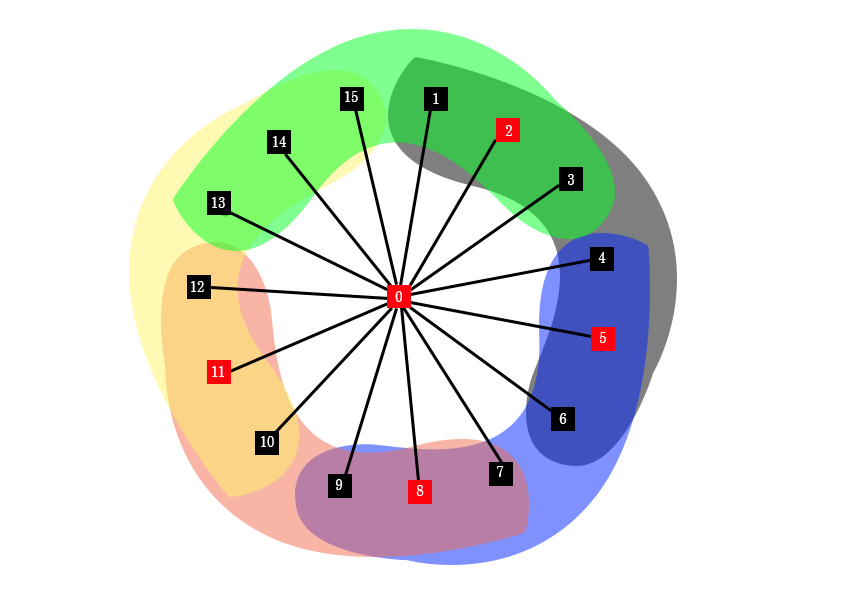}
    \caption{Errors occurring alternatively with distance 2 vertices apart.\label{fig:e6}}
  \end{figure}
For the error configuration \autoref{fig:e6}, the set of equations becomes:
\begin{center}
\begin{tabular}{ |c|c| } 
 \hline
 Vertex $x$ & Equations \\ 
 \hline
 1 and 3 & $d_0 + d_2 + d_5 = 0$ \\ 
 4 and 6 & $d_0 + d_2 + d_5 + d_8 = 0$ \\
 7 and 9 & $d_0 + d_5 + d_8 + d_{11} = 0$\\
 10 and 12 & $d_0 + d_8 + d_{11} = 0$\\
 13, 14 and 15 & $d_0 + d_2 + d_11 = 0$\\
 \hline
\end{tabular}
\end{center}

This set of equations clearly implies that $d_0 = d_2 = d_5 = d_8 = d_{11} = 0$. Thus the error configuration is detected. So we conclude by saying the code correct arbitrary four errors. 

\section{Conclusion}
In this paper we have discussed the error correcting properties of hypergraphs and arrived at a condition for error correction using the hypergraph states. We have also shown that to correct the equal number of errors on the same graph topology, the number of gate operations required to prepare the hypergraph states is less than that of the graph states for any hyper-edge of degree greater than or equal to 6.

Error correction using hypergraph states begins a new direction in the field of fault tolerant quantum computation. We believe further research in this area will reveal more insights and new interesting results.

\vspace*{5mm}

\end{document}